\definecolor{webgreen}{rgb}{0,0.4,0}
\definecolor{webbrown}{rgb}{0.6,0,0}
\definecolor{purple}{rgb}{0.5,0,0.25}
\definecolor{darkblue}{rgb}{0,0,0.7}
\definecolor{darkred}{rgb}{0.7,0,0}
\definecolor{darkgreen}{rgb}{0,0.7,0}
\begin{document}
\begin{spacing}{1.3}

\title{Ex-post implementation with interdependent values \footnote{We are thankful to Debasis Mishra for his invaluable guidance and support. We also thank Arunava Sen and Stephen Morris for insightful comments.}}

\author{Saurav Goyal \footnote{Economics and Planning Unit, Indian Statistical Institute (Delhi)} \hskip 7mm Aroon Narayanan \footnote{Department of Economics, MIT}}

\date{}

\maketitle

\begin{abstract}
    We characterize ex-post implementable allocation rules for single object auctions under quasi-linear preferences with convex interdependent value functions. We show that requiring ex-post implementability is equivalent to requiring that the allocation rule must satisfy a condition that we call {\sl eventual monotonicity (EM)}, which is a weakening of monotonicity, a familiar condition used to characterize dominant strategy implementation.  \\

    \noindent Keywords: ex-post implementation, interdependent value auction, eventual monotonicity, optimal auction \\

    \noindent JEL Classification: D44
\end{abstract}

\newpage

\section{Introduction}

We study the single object auction model when agents have interdependent values. In our model, each agent $i$
has a private signal $s_i$ about her value and her (ex-post) value for the object depends on the private signals of all the
$n$ agents: $v_i(s_1,\ldots,s_n)$. The only assumptions we make on the value function $v_i$ is that it is
convex and non-decreasing in the signal $s_i$. Thus, $v_i$ allows for a rich class of
interdependence: the additive model of \citet{BulowKlemperer2002}; the max model of \citet{BBM2019}; and the private values
model in \citet{Myerson1981}. We define a new property for allocation rules
that we call {\sl eventual monotonicity (EM)}. We show that an allocation rule is ex-post implementable (i.e., there exists
a transfer rule such that the allocation rule and the transfer rule form an ex-post incentive compatible mechanism) if and only if
it satisfies eventual monotonicity.

This novel property is a weakening of the
monotonicity property that characterizes dominant strategy implementation \citep{Myerson1981}. Towards understanding this property, let $\ell_{v_i}(s_{-i})$ denote the supremum of all signals of agent $i$ for which $v_i$ takes the same value as it does when $i$ has the lowest possible signal.\footnote{We assume that the signal space for each agent is an interval, so this supremum is also over an interval.} Eventual monotonicity is then the following requirement: for any $s_{-i}$  and any pair of signals $s_i > s'_i$ for agent $i$ with $s_i > \ell_{v_i}(s_{-i})$, the allocation probablity of agent $i$ at $s'_i$ is no less than the allocation probability at $s_i$. This requirement captures two conditions. First, the allocation probability of agent $i$ must be increasing beyond $\ell_{v_i}(s_{-i})$. Two, while it does not need to be increasing below $\ell_{v_i}(s_{-i})$, it must be the case that the allocation probability for any type above $\ell_{v_i}(s_{-i})$ is more than the allocation probability for any type below $\ell_{v_i}(s_{-i})$. See Figure \ref{fig:END} below for an illustration.

Extending Myerson's monotonicity characterization to interdependent values models may be useful in solving optimal auction problems in these models with ex-post incentive compatibility as the solution concept. The interdependent values model is considered to be a model that reflects many realistic problems where agents have imperfect information about values: \citet{MilgromWeber,BulowKlemperer2002,BBM2019} contain several motivating examples.\footnote{\citet{BBM2019} describes a setting where an agent's signal denotes her private use. If an agent comes to know of various uses of the object, she makes the best use of it.
This is captured in their max model, a common value model, given by $v_i(s_1,\ldots,s_n)=\max s_j$. Our results cover their model.}

Our characterization and a payment equivalence formula provide a handy method of deriving an expression for the revenue of the revenue-maximizing mechanism in terms of maximizing an objective function within the class of eventually monotone allocation rules. The tractability of deriving the exact optimal EPIC mechanism in general will depend on the form of the value functions. We show that for strictly increasing value functions, the usual Myersonian approach works. We are also able to show that the BIC and IIR inclusive posted price mechanism studied in \citet{BulowKlemperer2002}, and shown to be optimal in \citet{BBM2019}, can in fact be achieved in expectation by an EPIC and EPIR mechanism using the characterization that we develop, under the condition that the object must be sold.

{\bf Relation to the literature.} Much of the literature on interdependent values model focuses on Bayesian incentive compatibility (BIC) and interim individual rationality (IIR). However, BIC has been criticized for its reliance on the specific priors of the agents, and hence raises questions of robustness. We use the stronger and more robust concepts of \textit{ex-post} incentive compatibility (EPIC) and \textit{ex-post} individual rationality (EPIR) instead. While many interesting but specific models of interdependent values have been studied, a general characterization of ex-post implementability paralleling that for private values has proved elusive, partly because of the difficulty in identifying the set of binding constraints. \citet{Maskin1992} shows that the Vickrey auction may not be ex-post implementable without the single-crossing property for the value functions \footnote{The single-crossing property ensures that the agent with the highest signal is also the agent with the highest value, which ensures that efficient auctions produce monotonic, and hence implementable, allocation rules. Our result is orthogonal - we show that under convexity there is an exact characterization of implementable rules.}. \citet{jehiel2006} demonstrates that with multidimensional types and interdependent values, there are no non-trivial (non-constant) ex-post implementable deterministic social choice functions, while \citet{Bikhchandani2006} shows that this is not the case for private values. We fill this gap in the literature by providing a characterization of ex-post implementability in models of interdependent values.

The paper closest to ours is \citet{ULKU2014}, which provides sufficient conditions under which allocation rules are ex-post implementable. Building on a result from \citet{CremMclean}, they show that under increasing differences across signals, monotonicity of the allocation rule in own signal is sufficient for ex-post implementation. However they do not provide a characterization.

The rest of the paper proceeds as follows. Section 2 specifies the setting of the problem. Section 3 builds the characterization. Section 4 explores the problem of revenue maximization, both in general and in specific models.

\section{Model}

\newtheorem{definition}{Definition}[]
\newtheorem{question}{Question}[]
\newtheorem{observation}{Observation}[]
\newtheorem{conjecture}{Conjecture}[]
\newtheorem{claim}{Claim}[]
\newtheorem{lemma}{Lemma}[]
\newtheorem{proposition}{Proposition}[]
\newtheorem{corollary}{Corollary}[]
\newtheorem{theorem}{Theorem}[]

\newcounter{example}[section]
\newenvironment{example}[1][]{\refstepcounter{example}\par\medskip
   \noindent \textbf{Example~\theexample. #1} \rmfamily}{\medskip}

Let $N = \{1,2,...,n\}$ be the set of agents. There is one good to be allocated. The agents have quasi-linear interdependent preferences for the good. Each agent receives signals from $S_i = [\underline{s_i},\overline{s_i}]$. Let $v_i:S^{n} \to \mathbb{R}$ be the valuation function of each agent $i$. $v_i(s_i,s_{-i})\equiv v_i(s_1,s_2,...,s_n)$ denotes the valuation of agent $i$ at signal profile $(s_1,s_2,...,s_n)$. We assume that $v_i(s_i,s_{-i})$ is convex, continuous and non-decreasing in $s_i$ on $[\underline{s_i},\overline{s_i}]$,
for all $i$. For every $i \in N$ and for every $s_{-i}$, define
\begin{align*}
\ell_{v_i}(s_{-i}) &:= \sup\{s_i \in S_i: v_i(s_i,s_{-i})=v_i(\underline{s}_i,s_{-i})\}
\end{align*}
Since $v_i$ is convex, continuous, and non-decreasing, $\ell_{v_i}(s_{-i})$ is the supremum of the interval where $v_i$ is constant. For instance, in the max value function studied in \citet{BBM2019}, $v_i(s_i,s_{-i})=\max_j s_j$
and $\ell_{v_i}(s_{-i})=\max_{j \ne i}s_j$.

We invoke the revelation principle to focus only on direct mechanisms. Thus a mechanism in this setting is the tuple $(q,p)$ where $q_i : S^n \to [0,1]$ is the allocation rule for agent $i$ and $p_i: S^n \to \mathbb{R}$ is the payment rule for agent $i$.

\begin{definition}
An allocation rule $q$ is \textbf{ex-post implementable} if there exists a payment rule $p$ such that the following holds $\forall i$ $\forall s_{-i}$:
$$u_i(s_i,s_{-i})\geq u_i(s_i',s_{-i})+[v_i(s_i,s_{-i})-v_i(s_i',s_{-i})]q_i(s_i',s_{-i}), \ \forall \ s_i,s_i' $$

where $u_i(s_i,s_{-i})=q_i(s_i,s_{-i})v_i(s_i,s_{-i})-p_i(s_i,s_{-i})$ is the net payoff function for agent $i$.
In this case, we say that the mechanism $(q,p)$ is \textbf{ex-post incentive compatible (EPIC)}.
\end{definition}

\section{Eventual monotonicity and ex-post implementation}
Our main result characterizes ex-post implementable allocation rules. The approach we take to derive this result is fairly standard - we derive a necessary condition and then show that it also turns out to be sufficient. Here the appropriate necessary condition is a suitable modification of monotonicity.

\begin{definition}

The allocation rule $q$ is \textbf{eventually monotone} if $\forall i$, $\forall s_{-i}$ and for every $s_i > s'_i$ with $s_i > \ell_{v_i}(s_{-i})$, we have $$q_i(s_i,s_{-i}) \ge q_i(s'_i,s_{-i})$$

\end{definition}

Figure \ref{fig:END} illustrates what an EM rule might look like.

\begin{figure}[htp]
    \centering
    \includegraphics[width=8cm]{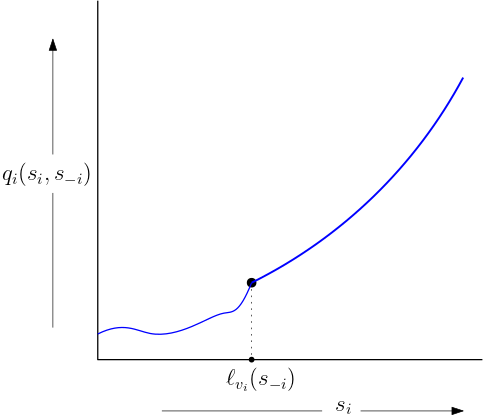}
    \caption{An eventually monotone rule}
    \label{fig:END}
\end{figure}

Now we are ready to state and prove our main result.

\begin{theorem}\label{epicthm}
An allocation rule $q$ is ex-post implementable if and only if it is eventually monotone.
\end{theorem}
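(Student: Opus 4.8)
The plan is to prove both directions by working one agent $i$ and one fixed $s_{-i}$ at a time, since the EPIC constraints decouple across profiles. For brevity write $q(s_i) := q_i(s_i, s_{-i})$, $v(s_i) := v_i(s_i, s_{-i})$, $u(s_i) := u_i(s_i, s_{-i})$, and $\ell := \ell_{v_i}(s_{-i})$. The defining inequality says that for every $s_i, s_i'$,
\[
u(s_i) \ge u(s_i') + [v(s_i) - v(s_i')]\, q(s_i').
\]
Swapping the roles of $s_i$ and $s_i'$ and adding the two inequalities yields the standard cyclic-monotonicity consequence
\[
[v(s_i) - v(s_i')]\,[q(s_i) - q(s_i')] \ge 0,
\]
so whenever $v(s_i) > v(s_i')$ we must have $q(s_i) \ge q(s_i')$.

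\textbf{Necessity.} Suppose $q$ is ex-post implementable, and take $s_i > s_i'$ with $s_i > \ell$. Because $v$ is non-decreasing, $v(s_i) \ge v(s_i')$. I claim $v(s_i) > v(s_i')$ unless $q(s_i) \ge q(s_i')$ already holds for another reason. If $s_i' \ge \ell$ then both signals lie in the region where $v$ has already left its initial flat stretch; here convexity together with $s_i > s_i' \ge \ell$ forces $v(s_i) > v(s_i')$ (the value has strictly increased past the constant interval), and the displayed cyclic inequality gives $q(s_i) \ge q(s_i')$. If instead $s_i' < \ell$, then $v(s_i') = v(\underline{s}_i)$ while $s_i > \ell$ gives $v(s_i) > v(\underline{s}_i) = v(s_i')$, so again $q(s_i) \ge q(s_i')$. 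In both cases eventual monotonicity holds. The one step needing care is confirming $v$ is \emph{strictly} increasing to the right of $\ell$, which is exactly what convexity plus the definition of $\ell$ as the supremum of the flat interval delivers.

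\textbf{Sufficiency.} This is the harder direction and the main obstacle. Given an eventually monotone $q$, I must construct a payment rule $p$ (equivalently a payoff $u$) satisfying every EPIC inequality. The natural candidate, mimicking the private-values integral formula, is
\[
u(s_i) = u(\underline{s}_i) + \int_{\underline{s}_i}^{s_i} q(t)\, v'(t)\, dt,
\]
using convexity of $v$ to guarantee differentiability almost everywhere and integrability. I would then verify the incentive inequality $u(s_i) - u(s_i') \ge [v(s_i) - v(s_i')] q(s_i')$ by splitting into the cases $s_i > s_i'$ and $s_i < s_i'$ and comparing the integral $\int_{s_i'}^{s_i} q(t) v'(t)\,dt$ against $q(s_i')[v(s_i) - v(s_i')] = q(s_i') \int_{s_i'}^{s_i} v'(t)\,dt$. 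The crux is that this comparison reduces to controlling the sign of $\int_{s_i'}^{s_i} [q(t) - q(s_i')]\, v'(t)\, dt$. On the flat initial interval $v' = 0$, so contributions there vanish automatically; this is precisely why only the behavior of $q$ above $\ell$ matters, and why EM rather than full monotonicity suffices. Above $\ell$, eventual monotonicity ensures $q(t) - q(s_i')$ has the right sign weighted against $v' \ge 0$, making the integral work out. The delicate points are handling signals straddling $\ell$ and justifying the integral representation under mere convexity, which I would address by approximating with the right-derivative of $v$ and invoking the monotone structure of $q$ on $(\ell, \overline{s}_i]$.
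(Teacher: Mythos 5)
Your proposal is correct and follows essentially the same route as the paper: the two-sided EPIC inequalities give $[v(s_i)-v(s_i')][q(s_i)-q(s_i')]\ge 0$ for necessity (using that convexity makes $v$ strictly increasing above $\ell$), and sufficiency uses the identical integral payment formula. Your observation that the whole verification reduces to the sign of $\int_{s_i'}^{s_i}[q(t)-q(s_i')]\,v'(t)\,dt$ — zero integrand below $\ell$, nonnegative above by EM — is a clean unification of the paper's explicit six-case enumeration, and it does go through in both orientations of $s_i$ versus $s_i'$.
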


It is direct to note that EPIC is equivalent to dominant strategy incentive compatibility (DSIC) under private values, and that eventual monotonicity is just monotonocity. Hence, we obtain the standard characterization of dominant-strategy implementable rules as a corollary of the above theorem.

\section{Optimal mechanisms}
In this section, we assume that the signals of the agents are independent and identically distributed, each drawn from some distribution $F$ with full support on $S$. We will first derive a condition for revenue maximization, and then explore its implications for some value functions.

Building on Theorem \ref{epicthm}, we get the following characterization of EPIC mechanisms. 

\begin{proposition}\label{epicprop}
A mechanism $(q,p)$ is ex-post incentive compatible if and only if $q$ is eventually monotone and $p$ is given by \footnote{Note that $v_i$ need not be differentiable with respect to $s_i$ everywhere, but
convexity ensures that it always has a sub-gradient, which is equal to its derivative whenever the latter exists (which is almost everywhere). We write $\frac{\partial v_i(x,s_{-i})}{\partial x}$
to denote the derivative of $v_i(\cdot,s_{-i})$ with respect to the first argument at $x$ whenever
it is differentiable. Else, it is some selection from the subgradient of $v_i(\cdot,s_{-i})$ at $x$.}:
    \begin{align*}
        p_i(s_i,s_{-i}) = \ &p_i(\underline{s},s_{-i})-q_i(\underline{s},s_{-i})v_i(\underline{s},s_{-i}) \\
        &+ q_i(s_i,s_{-i})v_i(s_i,s_{-i})-\int^{s_i}_{\underline{s}}\frac{\partial v_i(x,s_{-i})}{\partial x} q_i(x,s_{-i}) \ dx
    \end{align*}
\end{proposition}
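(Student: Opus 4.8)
The plan is to fix an agent $i$ and a profile $s_{-i}$ throughout, so that the problem collapses to the one-dimensional question of incentive compatibility in $s_i$, and to prove the two implications separately. The ``only if'' direction is essentially a payment-equivalence (envelope) argument built on Theorem \ref{epicthm}, while the ``if'' direction is where eventual monotonicity does the real work.

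For the only-if direction, suppose $(q,p)$ is EPIC. Then $q$ is ex-post implementable, so by Theorem \ref{epicthm} it is eventually monotone, which gives the first claim. To obtain the payment formula I would work with the indirect utility $u_i(s_i,s_{-i}) = q_i(s_i,s_{-i})v_i(s_i,s_{-i}) - p_i(s_i,s_{-i})$. The EPIC inequalities say exactly that $u_i(s_i,s_{-i}) = \sup_{s_i'}\{\, v_i(s_i,s_{-i})\,q_i(s_i',s_{-i}) - p_i(s_i',s_{-i})\,\}$, with the supremum attained at the truthful report $s_i'=s_i$. For each fixed $s_i'$ the bracketed term is, as a function of $s_i$, a nonnegative multiple $q_i(s_i',s_{-i})\in[0,1]$ of the convex map $v_i(\cdot,s_{-i})$ minus a constant, hence convex; therefore $u_i(\cdot,s_{-i})$ is a supremum of convex functions, so it is itself convex, in particular locally Lipschitz and absolutely continuous. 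A standard envelope (Milgrom--Segal) argument then gives $\tfrac{\partial}{\partial s_i}u_i(s_i,s_{-i}) = \tfrac{\partial v_i(s_i,s_{-i})}{\partial s_i}\,q_i(s_i,s_{-i})$ for almost every $s_i$; integrating from $\underline{s}$ and substituting the definition of $u_i$ back in delivers exactly the stated expression for $p_i$.

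For the if direction, suppose $q$ is eventually monotone and $p$ is given by the formula; equivalently, $u_i(s_i,s_{-i}) = u_i(\underline{s},s_{-i}) + \int_{\underline{s}}^{s_i}\tfrac{\partial v_i(x,s_{-i})}{\partial x}\,q_i(x,s_{-i})\,dx$. I must verify the EPIC inequality, which after rearranging and cancelling reads $\int_{s_i'}^{s_i}\tfrac{\partial v_i(x,s_{-i})}{\partial x}\,[\,q_i(x,s_{-i}) - q_i(s_i',s_{-i})\,]\,dx \ge 0$ for all $s_i,s_i'$. The crucial structural fact, from convexity and continuity of $v_i(\cdot,s_{-i})$, is that $\tfrac{\partial v_i(x,s_{-i})}{\partial x}=0$ for $x<\ell_{v_i}(s_{-i})$ while $\tfrac{\partial v_i(x,s_{-i})}{\partial x}\ge 0$ everywhere. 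Thus the integrand vanishes on the flat region and is supported on $\{x > \ell_{v_i}(s_{-i})\}$, which is precisely where eventual monotonicity controls the sign of $q_i(x,s_{-i}) - q_i(s_i',s_{-i})$. Treating $s_i>s_i'$ and $s_i<s_i'$ separately and applying the EM inequality to the relevant ordered pair (note that any contributing $x>\ell_{v_i}(s_{-i})$ forces the larger of $\{x,s_i'\}$ to exceed $\ell_{v_i}(s_{-i})$, so EM is applicable) makes every nonzero term of the correct sign, giving the integral its required sign in each case.

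The main obstacle, and the conceptual heart of the result, is the if direction: checking that EM, rather than full monotonicity, is exactly strong enough to sign this integral. The payoff of the convexity hypothesis is that $\partial v_i/\partial s_i$ annihilates the integrand below $\ell_{v_i}(s_{-i})$, so monotonicity of $q_i$ is never needed there, which is exactly the freedom EM permits; matching these two facts is the key step. A secondary technical point is making the envelope formula in the only-if direction rigorous despite $v_i(\cdot,s_{-i})$ being only almost-everywhere differentiable, which the subgradient convention of the footnote together with the absolute continuity of both $u_i(\cdot,s_{-i})$ and $v_i(\cdot,s_{-i})$ are designed to handle.
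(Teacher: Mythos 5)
Your proposal is correct and follows essentially the same route as the paper: the only-if direction combines Theorem \ref{epicthm} with convexity of $u_i(\cdot,s_{-i})$, the a.e.\ derivative identity $\partial u_i/\partial s_i = q_i\,\partial v_i/\partial s_i$, and integration, while the if direction verifies the EPIC inequality directly using that $\partial v_i/\partial x$ vanishes below $\ell_{v_i}(s_{-i})$ and that EM signs $q_i(x,s_{-i})-q_i(s_i',s_{-i})$ above it. Your two local refinements --- obtaining convexity of $u_i$ as a pointwise supremum of convex functions rather than by the paper's direct $\lambda$-combination argument, and collapsing the paper's six-case verification into a single signed-integrand inequality --- are sound and merely streamline the same argument.
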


The following generalization of the private values version of virtual value is well-known in the literature.
\begin{definition}
The \textbf{virtual value} of an agent at a signal profile is given by
$$J_i (s_i,s_{-i}) = v_i(s_i,s_{-i})-\frac{1-F(s_i)}{f(s_i)} \frac{\partial v_i(s_i,s_{-i})}{\partial s_i}$$
\end{definition}

Also, we consider rules that ensure that the agents continue to want to participate ex-post.

\begin{definition}
A mechanism is \textbf{ex-post individually rational (EPIR)} if the following holds $\forall i$ $\forall s_i$ $\forall s_{-i}$:
$$u_i(s_i,s_{-i})\geq 0 $$
\end{definition}

We say that a mechanism $(q,p)$ is \textbf{optimal} if it is EPIC and EPIR and generates more expected
revenue than any other EPIC and EPIR mechanism. We then have the following formula for the revenue in an optimal mechanism.

\begin{lemma}\label{revform}
The optimal mechanism maximizes
$$
\int_{s \in S^n} \Bigl[ \sum_{i \in N} J_i (s) q_i(s) \Bigr] \ f(s)\ ds
$$
within the class of eventually monotone allocation rules.
\end{lemma}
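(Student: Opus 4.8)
The plan is to follow the standard Myersonian revenue-equivalence route, now specialized to the interdependent-values payment formula of Proposition \ref{epicprop}. First I would fix an arbitrary EPIC and EPIR mechanism $(q,p)$; by Theorem \ref{epicthm} and Proposition \ref{epicprop} this is exactly a mechanism whose $q$ is eventually monotone and whose payments are the stated formula, the only remaining freedom being the boundary payoffs $u_i(\underline{s},s_{-i})$. Substituting the payment formula into $u_i = q_i v_i - p_i$ yields
$$u_i(s_i,s_{-i}) = u_i(\underline{s},s_{-i}) + \int_{\underline{s}}^{s_i}\frac{\partial v_i(x,s_{-i})}{\partial x}\,q_i(x,s_{-i})\,dx,$$
so the informational rent above the bottom type is precisely the integral term.

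Next I would dispatch individual rationality. Since $v_i$ is non-decreasing in $s_i$ its (sub)derivative is non-negative, and $q_i \ge 0$, so the integrand is non-negative and $u_i(\cdot,s_{-i})$ is non-decreasing in $s_i$. Hence all the EPIR constraints for a fixed $s_{-i}$ collapse to the single constraint $u_i(\underline{s},s_{-i}) \ge 0$. For a fixed eventually monotone $q$, the expected revenue $\sum_i \int p_i\, f\,ds$ is decreasing in each $u_i(\underline{s},s_{-i})$, so it is maximized by setting $u_i(\underline{s},s_{-i}) = 0$ for every $i$ and every $s_{-i}$; this is the key economic step pinning down the boundary condition. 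I would then compute the resulting revenue as expected value minus expected rent, $\sum_i \int (q_i v_i - u_i)\, f\, ds$.

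The rent term $\int u_i\, f\, ds$ is a double integral over the region $\underline{s}\le x \le s_i \le \overline{s}$; applying Fubini to interchange the order of integration, using independence so the marginal over $s_i$ factors out, and integrating $s_i$ out produces the survival weight $1 - F(x)$. Renaming the variable and multiplying and dividing by $f$ rewrites the rent as $\int \frac{1-F(s_i)}{f(s_i)}\frac{\partial v_i(s_i,s_{-i})}{\partial s_i}\, q_i\, f\, ds$. Subtracting this from $\int q_i v_i\, f\, ds$ collapses termwise into the virtual value $J_i$, giving revenue $=\int_{S^n}\sum_{i\in N} J_i(s)\, q_i(s)\, f(s)\,ds$. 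Because the admissible $q$ are exactly the eventually monotone rules, maximizing revenue is maximizing this integral over that class, which is the claim.

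The main obstacle I anticipate is not the algebra but justifying the manipulations when $v_i$ is merely convex rather than differentiable. I would invoke that a convex function is locally Lipschitz, hence absolutely continuous on the interior of its domain, so the fundamental theorem of calculus underlying the payment formula holds, the subgradient selection $\frac{\partial v_i}{\partial x}$ is integrable, and the Fubini interchange is legitimate. Some care is also needed to confirm that setting $u_i(\underline{s},s_{-i})=0$ is simultaneously optimal across all $s_{-i}$; this is fine because the boundary choices decouple across $s_{-i}$, so minimizing each pointwise minimizes the aggregate rent.
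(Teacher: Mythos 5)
Your proposal is correct and follows essentially the same route as the paper: reduce EPIR to the boundary condition $u_i(\underline{s},s_{-i})\ge 0$ via the payoff-envelope formula, observe that revenue is decreasing in the boundary rents so they are optimally set to zero, and then interchange the order of integration to convert the information-rent integral into the $\frac{1-F(s_i)}{f(s_i)}\frac{\partial v_i}{\partial s_i}$ term that assembles the virtual value. Writing revenue as value minus rent rather than integrating the payment formula directly is only a cosmetic difference, and your remarks on absolute continuity of convex functions address the same technical point the paper relegates to a footnote.
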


\subsection{Strictly increasing value functions}

When the value functions are strictly increasing in each agent's own signal, then it is direct to observe that EM reduces to monotonicity. In this case the optimization problem identified in Lemma \ref{revform} can be solved by the standard ironing technique of \citet{Myerson1981}. We state this observation in the following proposition:

\begin{proposition}
Suppose each agent's value function is strictly increasing in her own signal. Then, the optimal mechanism allocates the object to the agent with the highest ironed virtual value, and charges the payments that ex-post implements this allocation rule.
\end{proposition}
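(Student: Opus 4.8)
The plan is to leverage Lemma \ref{revform} together with the observation that, under strictly increasing value functions, eventual monotonicity collapses to ordinary monotonicity. The key simplification is that $\ell_{v_i}(s_{-i})=\underline{s}_i$ for every $s_{-i}$: since $v_i(\cdot,s_{-i})$ is strictly increasing, the only signal at which $v_i$ equals $v_i(\underline{s}_i,s_{-i})$ is $\underline{s}_i$ itself, so the supremum defining $\ell_{v_i}$ is the lower endpoint. Consequently the qualifier $s_i>\ell_{v_i}(s_{-i})$ in the definition of EM becomes $s_i>\underline{s}_i$, which is vacuous, and the EM requirement reduces to the statement that $q_i(\cdot,s_{-i})$ is non-decreasing in $s_i$ for every fixed $s_{-i}$ --- i.e. standard monotonicity in own signal. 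First I would record this reduction formally as the opening step.

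Next I would set up the pointwise maximization. By Lemma \ref{revform}, the optimal mechanism maximizes the expected virtual surplus $\int_{s\in S^n}\bigl[\sum_{i\in N} J_i(s)\,q_i(s)\bigr] f(s)\,ds$ over the class of monotone allocation rules (using the reduction just established). The plan is to apply Myerson's ironing procedure to the virtual value $J_i$. Since signals are i.i.d.\ and, under strict increasingness, $J_i(s_i,s_{-i})$ is a well-defined function of own signal for each fixed $s_{-i}$, I would define the ironed virtual value $\bar{J}_i$ as the derivative of the convex hull of the integral of $J_i$ with respect to $F$, exactly as in \citet{Myerson1981}. The standard ironing lemma then says that maximizing $\sum_i J_i(s) q_i(s)$ pointwise subject to monotonicity of each $q_i(\cdot,s_{-i})$ is achieved by the allocation that awards the object to the agent with the highest ironed virtual value $\bar{J}_i(s)$ (breaking ties arbitrarily), and that this allocation is indeed monotone. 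I would invoke this rather than rederive it.

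The one genuine subtlety --- and the step I expect to be the main obstacle --- is verifying that the pointwise-optimal rule ``allocate to the highest $\bar J_i$'' actually lies in the feasible monotone class \emph{after} accounting for the interdependence of values, since each $J_i$ depends on the full profile $s$ and not merely on $s_i$. The resolution is that monotonicity is required only in own signal $s_i$ holding $s_{-i}$ fixed; for fixed $s_{-i}$, ironing $J_i$ in the $s_i$-coordinate produces a non-decreasing $\bar J_i(\cdot,s_{-i})$, and the argmax rule inherits monotonicity in each agent's own coordinate by the usual pointwise argument. Because ironing is performed coordinate-by-coordinate, the resulting rule respects the feasibility constraints of Lemma \ref{revform}, so no cross-agent complication arises. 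I would close by noting that once this monotone allocation rule is fixed, Proposition \ref{epicprop} supplies the unique (up to the constant $p_i(\underline{s},s_{-i})$) payment rule that ex-post implements it, and EPIR pins down that constant, completing the argument.
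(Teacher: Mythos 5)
Your proposal is correct and takes essentially the same route as the paper, which offers no separate appendix proof for this proposition and simply observes in the text that strict increasingness makes $\ell_{v_i}(s_{-i})=\underline{s}_i$, so EM collapses to monotonicity and the optimization of Lemma \ref{revform} is handled by Myerson's standard ironing. If anything you go further than the paper by flagging the cross-agent subtlety (that $\bar J_j$ depends on $s_i$ through $v_j$, so monotonicity of the argmax rule in own signal is not the literal private-values argument); the paper does not address this point at all, so your treatment is at least as complete as the original.
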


Note also that the case of private values is covered by this result.

\subsection{Additive signals}

Using the framework developed above, we derive a result for the additive signals model, which is closely related to the one studied in \citet{BulowKlemperer2002}. In this model, the value functions are given by $v_i(s_1,s_2,...,s_n) = c_1 s_1 + c_2 s_2 +...+ c_n s_n$ with $c_i \geq 0$ for all $i$. An example of such a model is the simple additive value case - where each agent's value for the object is the simple sum of the signals of each agent. Denote by $c_i \frac{1-F(s_i)}{f(s_i)}$ the adjusted hazard rate of agent $i$.

\begin{proposition}\label{addthm}
 Suppose the distribution $F$ of the signals satisfies the monotone hazard rate property. Then a mechanism is optimal if and only if at every signal profile it allocates the object with probability $1$ to the agent with the lowest adjusted hazard rate if that agent's virtual value is non-negative and does not allocate the object otherwise.
\end{proposition}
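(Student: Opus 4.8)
The plan is to apply Lemma \ref{revform} and solve the maximization pointwise, then verify that the resulting rule is feasible, i.e., eventually monotone. First I would compute the virtual values for this model. Since $\frac{\partial v_i}{\partial s_i} = c_i$, the virtual value of agent $i$ is
\[
J_i(s) = \sum_{j\in N} c_j s_j - c_i\,\frac{1-F(s_i)}{f(s_i)}.
\]
The key structural observation is that the first term $\sum_j c_j s_j$ is common to all agents at a fixed profile $s$, so it cancels in any cross-agent comparison:
\[
J_i(s) - J_k(s) = c_k\,\frac{1-F(s_k)}{f(s_k)} - c_i\,\frac{1-F(s_i)}{f(s_i)}.
\]
Writing $h_i(s_i) := c_i\frac{1-F(s_i)}{f(s_i)}$ for the adjusted hazard rate, this says $J_i(s) \ge J_k(s)$ exactly when $h_i(s_i) \le h_k(s_k)$, so the agent with the highest virtual value is precisely the agent with the lowest adjusted hazard rate.

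Next I would solve the maximization in Lemma \ref{revform} pointwise. Because the objective is $\int \sum_i J_i(s) q_i(s) f(s)\,ds$ subject to $q_i(s) \ge 0$ and $\sum_i q_i(s) \le 1$, the integrand is maximized at each $s$ by placing all mass on an agent attaining $\max_i J_i(s)$ when that maximum is non-negative, and allocating nothing when it is negative. By the previous paragraph this is exactly the rule described in the statement: give the object to the lowest-adjusted-hazard-rate agent if her virtual value $J_i(s) = \sum_j c_j s_j - h_i(s_i)$ is non-negative, and withhold it otherwise. This yields an upper bound on attainable revenue that the rule attains; what remains is to show the rule is admissible, i.e., eventually monotone, so that it lies in the feasible class.

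The crux is verifying eventual monotonicity, and this is where the monotone hazard rate property enters. Fix $i$ and $s_{-i}$. If $c_i > 0$ then $v_i(\cdot,s_{-i})$ is strictly increasing, so $\ell_{v_i}(s_{-i}) = \underline{s_i}$ and EM coincides with ordinary monotonicity in $s_i$; if $c_i = 0$ then $v_i$ is constant in $s_i$, so $\ell_{v_i}(s_{-i}) = \overline{s_i}$ and EM imposes no constraint in that coordinate. Thus it suffices to treat $c_i>0$ and show $q_i(\cdot,s_{-i})$ is non-decreasing. As $s_i$ increases, MHR makes $\frac{1-F(s_i)}{f(s_i)}$ non-increasing, hence $h_i(s_i)$ non-increasing, while $h_j(s_j)$ for $j\ne i$ is unchanged, so once $i$ is the lowest-adjusted-hazard agent she remains so. Simultaneously $\sum_j c_j s_j$ increases and $h_i(s_i)$ decreases, so $J_i(s)$ increases, preserving the sign condition. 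Hence the set of $s_i$ at which $i$ is allocated is an up-interval and $q_i(\cdot,s_{-i})$ is non-decreasing. The candidate rule is therefore eventually monotone, and by Lemma \ref{revform} it is optimal, giving the ``if'' direction.

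For the ``only if'' direction I would argue that any optimal rule must achieve the pointwise maximum at almost every profile. Wherever the minimizing agent is unique and $\max_i J_i(s) \ne 0$, the pointwise maximizer is unique and forces the optimal rule to agree with the candidate; the exceptional set of profiles with tied adjusted hazard rates or with $\max_i J_i(s)=0$ has Lebesgue measure zero when $F$ is atomless with a strictly monotone hazard rate, so any optimal rule coincides with the described rule almost everywhere and yields the same expected revenue. I expect the main obstacle to be the eventual-monotonicity verification: showing that the greedy pointwise optimizer is implementable is precisely the step that can fail without a monotonicity assumption on $F$, and MHR is exactly what rescues it.
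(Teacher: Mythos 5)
Your proposal is correct and follows essentially the same route as the paper's proof: pointwise maximization of the virtual surplus from Lemma \ref{revform}, reduction of the cross-agent comparison to the adjusted hazard rates (since the term $\sum_j c_j s_j$ is common to all agents at a fixed profile), and an appeal to the monotone hazard rate property to show the greedy pointwise rule is non-decreasing in own signal and hence eventually monotone. If anything, your write-up is slightly more careful than the paper's, which does not separate the $c_i=0$ case, does not spell out that $\ell_{v_i}(s_{-i})=\underline{s_i}$ when $c_i>0$, and leaves the ``only if'' direction implicit.
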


An interesting corollary of this theorem is the following: when the weights $c_i$ are the same for every agent (i.e. the simple additive values case) and the distribution of signals has a monotone hazard rate, the mechanism allocates the object to the agent with the highest signal. This is because the agent with the lowest adjusted hazard rate at any profile must also have the highest signal, by the monotone hazard rate assumption.

\subsection{Max function}

\citet{BBM2019} derive the optimal Bayesian incentive compatible (BIC) and interim individually rational (IIR) mechanism for the maximum value function, $v_i(s_1,s_2,...,s_n)= \max(s_1,s_2,...,s_n)$. We use our framework to identify the optimal mechanism for this form of interdependence, under the assumption that the object must be sold.

\begin{proposition}\label{maxthm}
When it is mandatory to sell the object, mechanisms that allocate the object with probability $c_i$ to agent $i$ at all signal profiles, for some $c_i$ such that $\sum_{i}c_i = 1$, are optimal.
\end{proposition}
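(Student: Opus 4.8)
The plan is to run the objective from Lemma \ref{revform} through the virtual values of the max model and reduce the whole statement to a single inequality. Write $s^{*}:=\max_{j}s_{j}$, $w(x):=\tfrac{1-F(x)}{f(x)}\ge 0$, and let $k=k(s)$ be the (a.e.\ unique) agent holding the top signal. For $v_{i}=\max_{j}s_{j}$ we have $\ell_{v_{i}}(s_{-i})=\max_{j\ne i}s_{j}$, and $\partial v_{i}/\partial s_{i}=0$ for a loser (so $J_{i}=s^{*}$) while $J_{k}=s^{*}-w(s^{*})$ for the winner. Imposing must-sell $\sum_{i}q_{i}(s)=1$, the integrand of Lemma \ref{revform} collapses to $s^{*}-w(s^{*})q_{k}(s)$, so the objective equals $\mathbb{E}[s^{*}]-\mathbb{E}[w(s^{*})q_{k(s)}(s)]$. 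Every constant rule $q_{i}\equiv c_{i}$ is eventually monotone and satisfies must-sell, and by exchangeability of the i.i.d.\ signals $\mathbb{E}[w(s^{*})\mathbf{1}\{i\text{ wins}\}]=\tfrac1n\mathbb{E}[w(s^{*})]$ for each $i$; hence every such rule yields $\mathbb{E}[w(s^{*})q_{k}]=\tfrac1n\mathbb{E}[w(s^{*})]$, the same value for all $(c_{i})$ with $\sum_{i}c_{i}=1$. Thus the proposition is equivalent to showing that constant rules \emph{minimize} the weighted winner-allocation, i.e.\ that every EM must-sell rule satisfies $\mathbb{E}[w(s^{*})q_{k(s)}(s)]\ge\tfrac1n\mathbb{E}[w(s^{*})]$.

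I would prove this first for $n=2$, where the argument is clean and uses neither symmetry nor a monotone-hazard-rate hypothesis. Using $q_{2}=1-q_{1}$ and the symmetry of $f(s_{1})f(s_{2})$, the target reduces to $\int_{s_{1}>s_{2}}q_{1}(s)\,w(s_{1})f(s)\,ds\ge\int_{s_{1}<s_{2}}q_{1}(s)\,w(s_{2})f(s)\,ds$; reflecting the right-hand side through $s\mapsto(s_{2},s_{1})$ moves it onto $\{s_{1}>s_{2}\}$ with the identical weight $w(s_{1})f(s)$, so it suffices to prove the pointwise inequality $q_{1}(s_{1},s_{2})\ge q_{1}(s_{2},s_{1})$ on $\{s_{1}>s_{2}\}$. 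This follows from a chain through the diagonal: EM for agent $1$ on fiber $s_{2}$ (raising her own signal from $s_{2}$ to $s_{1}>\ell_{v_{1}}=s_{2}$) gives $q_{1}(s_{1},s_{2})\ge q_{1}(s_{2},s_{2})$; must-sell rewrites the latter as $1-q_{2}(s_{2},s_{2})$; EM for agent $2$ on fiber $s_{2}$ (raising \emph{her} signal from $s_{2}$ to $s_{1}>\ell_{v_{2}}=s_{2}$) gives $q_{2}(s_{2},s_{1})\ge q_{2}(s_{2},s_{2})$; and a final use of must-sell yields $q_{1}(s_{2},s_{1})=1-q_{2}(s_{2},s_{1})\le 1-q_{2}(s_{2},s_{2})=q_{1}(s_{2},s_{2})\le q_{1}(s_{1},s_{2})$.

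For general $n$ I would first symmetrize. Since $v_{i}=\max_{j}s_{j}$ is a symmetric valuation and the signals are i.i.d., averaging any EM, EPIR, must-sell mechanism over all relabelings of agents produces a permutation-invariant one with identical expected revenue and the same feasibility, so it is without loss to take $q$ symmetric. Must-sell then gives the exact identity
\[
\mathbb{E}[w(s^{*})q_{k(s)}(s)]-\tfrac1n\mathbb{E}[w(s^{*})]=\tfrac1n\sum_{i\ne l}\int_{\{s_{i}>\ell_{v_{i}}(s_{-i})\}}\bigl(q_{i}(s)-q_{l}(s)\bigr)\,w(s^{*})\,f(s)\,ds,
\]
obtained from $q_{i}-\tfrac1n=\tfrac1n\sum_{l}(q_{i}-q_{l})$, so it remains to show that on the event that $i$ wins the winner's allocation dominates each loser's in this integrated, $w$-weighted sense. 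Pairing $(i,l)$ with $(l,i)$ through the transposition $s\mapsto s^{(il)}$, exactly as in the $n=2$ step, reduces each pair to the sign of $[q_{i}(s)-q_{i}(s^{(il)})]+[q_{l}(s^{(il)})-q_{l}(s)]$ on $\{i\text{ wins}\}$.

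The main obstacle is precisely this last sign. For $n=2$ the diagonal chain makes it pointwise nonnegative, but for $n\ge 3$ the chain must pass through a profile where agents $i$ and $l$ are tied, and the must-sell balance there now also involves the allocations of the remaining $n-2$ agents; these cross terms are not controlled pointwise, which is why $n=2$ goes through verbatim while $n\ge 3$ does not. My plan to overcome this is an induction on $n$: peel off the lowest-ranked agent at such a tie, whose own EM constraint together with the fact that $w(s^{*})$ is being evaluated at the common (low) tie value lets her contribution be absorbed, and use EM to push the winner's allocation monotonically upward from the tie boundary. Turning this heuristic aggregation of the lower-ranked agents' allocations into a rigorous bound is the crux of the proof, and is where I would concentrate the technical work.
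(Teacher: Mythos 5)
There is a genuine gap, and you have correctly located it yourself: the entire argument for $n\ge 3$ is left unproved. Your reduction of the objective to $\mathbb{E}[s^{*}]-\mathbb{E}[w(s^{*})q_{k(s)}(s)]$ is correct, and your $n=2$ chain through the diagonal ($q_1(s_1,s_2)\ge q_1(s_2,s_2)=1-q_2(s_2,s_2)\ge 1-q_2(s_2,s_1)=q_1(s_2,s_1)$ for $s_1>s_2$) is a clean, self-contained pointwise proof that uses only EM and must-sell. But for $n\ge 3$ the transposition pairing requires controlling $[q_i(s)-q_i(s^{(il)})]+[q_l(s^{(il)})-q_l(s)]$ on the event that $i$ wins, and as you note the must-sell identity at the tie profile now involves the allocations of the other $n-2$ agents, which EM does not pin down. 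The proposed ``peel off the lowest-ranked agent'' induction is a heuristic, not a proof; as written, the claim that every EM must-sell rule satisfies $\mathbb{E}[w(s^{*})q_{k(s)}(s)]\ge\tfrac1n\mathbb{E}[w(s^{*})]$ is established only for $n=2$.

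The paper avoids this difficulty entirely by a comparison-of-classes argument: it first observes that EPIC implies BIC and EPIR implies IIR (just integrate the ex-post constraints over $s_{-i}$), so the revenue of the optimal BIC and IIR must-sell mechanism --- which \citet{BBM2019} show is the inclusive posted price, with revenue $\int x\,d(F^{N-1}(x))$ --- is an upper bound on what any EPIC and EPIR must-sell mechanism can achieve. It then verifies that the constant rule $q_i\equiv\tfrac1n$ is (eventually) monotone, hence ex-post implementable, computes its payments explicitly, and checks that its expected revenue equals that same upper bound. Optimality follows immediately, with no need to compare the constant rule against arbitrary EM rules. If you want to salvage your direct approach, note that the inequality you are trying to prove for general $n$ is exactly equivalent (via Lemma \ref{revform}) to the statement that no EM must-sell rule beats the BBM benchmark, so you could close your gap by citing their result --- but at that point you have reproduced the paper's proof, and the direct combinatorial argument for $n\ge3$ remains an open piece of work on your end.
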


Each of these optimal mechanisms guarantees the same revenue in expectation as the BIC and IIR equal allocation rule of \citet{BBM2019}, which allocates with probability $\frac{1}{n}$ to each agent at every profile. Since their optimal mechanism is not ex-post IR, our result identifies an ex-post IR implementation of their mechanism.

\newpage

\printbibliography

\begin{appendices}

\section{Proofs}

\begin{proof}[\textbf{Proof of Theorem \ref{epicthm}}]

Suppose $q$ is ex-post implementable. Then, there exists a payment rule $p$ such that $(q,p)$ is EPIC. Let $s_i'' \geq s_i'$. EPIC implies
$$q_i(s_i'',s_{-i})v_i(s_i'',s_{-i})-p_i(s_i'',s_{-i}) \geq q_i(s_i',s_{-i})v_i(s_i'',s_{-i})-p_i(s_i',s_{-i})$$

and
$$q_i(s_i',s_{-i})v_i(s_i',s_{-i})-p_i(s_i',s_{-i}) \geq q_i(s_i'',s_{-i})v_i(s_i',s_{-i})-p_i(s_i'',s_{-i})$$

Adding these two together,
\begin{equation}\label{6}
    (q_i(s_i'',s_{-i})-q_i(s_i',s_{-i}))(v_i(s_i'',s_{-i})-v_i(s_i',s_{-i}))\geq 0
\end{equation}

Let $s_i'' > s_i' > \ell_{v_i}(s_{-i}) \geq s_i$. Then $v_i(s_i'',s_{-i}) > v_i(s_i',s_{-i}) > v_i(s_i,s_{-i})$. Hence by \ref{6}, we must have $$q_i(s_i'',s_{-i}) \geq q_i(s_i',s_{-i}) \geq q_i(s_i,s_{-i})$$

Thus, $q$ must be eventually monotone. 

For the converse, suppose $q$ is eventually monotone. Let $p$ be given by 

$$p_i(s_i,s_{-i})=p_i(\underline{s},s_{-i})+q_i(s_i,s_{-i})v_i(s_i,s_{-i})-q_i(\underline{s},s_{-i})v_i(\underline{s},s_{-i})-\int^{s_i}_{\underline{s}}\frac{\partial v_i(x,s_{-i})}{\partial x} q_i(x,s_{-i}) \ dx$$

Fix $i \in N, \ s_{-i}$, and consider $s_i, \ s_i'\in S$. We need to prove that the following quantity is non-negative.
$$ \rho \coloneqq u_i(s_i,s_{-i}) - u_i(s_i',s_{-i}) - [v_i(s_i,s_{-i})-v_i(s_i',s_{-i})]q_i(s_i',s_{-i})$$
$$=(q_i(s_i,s_{-i})v_i(s_i,s_{-i})-p_i(s_i,s_{-i}))-(q_i(s_i',s_{-i})v_i(s_i,s_{-i})-p_i(s_i',s_{-i})) $$
$$=\int_{\underline{s}}^{s_i}\frac{\partial v_i(x,s_{-i})}{\partial x} q_i(x,s_{-i}) \ dx-\int_{\underline{s}}^{s_i'}\frac{\partial v_i(x,s_{-i})}{\partial x} q_i(x,s_{-i}) \ dx-q_i(s_i',s_{-i})[v_i(s_i,s_{-i})-v_i(s_i',s_{-i})]$$

Let us proceed by taking cases.

Case: $s_i \geq s_i'> \ell_{v_i}(s_{-i})$
$$\rho=\int_{s_i'}^{s_i}\frac{\partial v_i(x,s_{-i})}{\partial x} q_i(x,s_{-i}) \ dx-q_i(s_i',s_{-i})[v_i(s_i,s_{-i})-v_i(s_i',s_{-i})]$$
$$\geq \int_{s_i'}^{s_i}\frac{\partial v_i(x,s_{-i})}{\partial x} q_i(s_i',s_{-i}) \ dx-q_i(s_i',s_{-i})[v_i(s_i,s_{-i})-v_i(s_i',s_{-i})]$$
$$= q_i(s_i',s_{-i}) \int_{s_i'}^{s_i} dv_i(x,s_{-i}) -q_i(s_i',s_{-i})[v_i(s_i,s_{-i})-v_i(s_i',s_{-i})]=0$$

Case: $s_i > \ell_{v_i}(s_{-i}) \geq s_i'$
$$\rho = \int_{\ell_{v_i}(s_{-i})}^{s_i}\frac{\partial v_i(x,s_{-i})}{\partial x} q_i(x,s_{-i}) \ dx-q_i(s_i',s_{-i})[v_i(s_i,s_{-i})-v_i(s_i',s_{-i})]$$
$$\geq \int_{\ell_{v_i}(s_{-i})}^{s_i}\frac{\partial v_i(x,s_{-i})}{\partial x} q_i(s_i',s_{-i}) \ dx-q_i(s_i',s_{-i})[v_i(s_i,s_{-i})-v_i(s_i',s_{-i})]$$
$$= q_i(s_i',s_{-i}) \int_{\ell_{v_i}(s_{-i})}^{s_i} dv_i(x,s_{-i}) -q_i(s_i',s_{-i})[v_i(s_i,s_{-i})-v_i(s_i',s_{-i})]$$
$$= q_i(s_i',s_{-i}) [v_i(s_i,s_{-i})-v_i(\ell_{v_i}(s_{-i}),s_{-i})] -q_i(s_i',s_{-i})[v_i(s_i,s_{-i})-v_i(s_i',s_{-i})]$$
$$= q_i(s_i',s_{-i}) [v_i(s_i,s_{-i})-v_i(s_i',s_{-i})] -q_i(s_i',s_{-i})[v_i(s_i,s_{-i})-v_i(s_i',s_{-i})]=0$$

Case: $\ell_{v_i}(s_{-i}) > s_i  \geq s_i'$
$$\rho = -q_i(s_i',s_{-i})[v_i(s_i,s_{-i})-v_i(s_i',s_{-i})]=0$$

Case: $s_i < s_i'<\ell_{v_i}(s_{-i})$
$$\rho = 0$$

Case: $s_i < \ell_{v_i}(s_{-i}) < s_i'$

$$\rho = -\int_{\ell_{v_i}(s_{-i})}^{s_i'}\frac{\partial v_i(x,s_{-i})}{\partial x} q_i(x,s_{-i}) \ dx-q_i(s_i',s_{-i})[v_i(s_i,s_{-i})-v_i(s_i',s_{-i})] $$
$$\geq -q_i(s_i',s_{-i}) \int_{\ell_{v_i}(s_{-i})}^{s_i'}\frac{\partial v_i(x,s_{-i})}{\partial x}  \ dx-q_i(s_i',s_{-i})[v_i(s_i,s_{-i})-v_i(s_i',s_{-i})] $$
$$= -q_i(s_i',s_{-i})[ v_i(s_i',s_{-i})-v_i(\ell_{v_i}(s_{-i}),s_{-i}) ] -q_i(s_i',s_{-i})[v_i(s_i,s_{-i})-v_i(s_i',s_{-i})] $$
$$= -q_i(s_i',s_{-i})[ v_i(s_i',s_{-i})-v_i(s_i,s_{-i}) ] -q_i(s_i',s_{-i})[v_i(s_i,s_{-i})-v_i(s_i',s_{-i})]=0 $$

Case: $\ell_{v_i}(s_{-i})<s_i< s_i'$
$$\rho = -\int_{s_i}^{s_i'}\frac{\partial v_i(x,s_{-i})}{\partial x} q_i(x,s_{-i}) \ dx-q_i(s_i',s_{-i})[v_i(s_i,s_{-i})-v_i(s_i',s_{-i})] $$
$$\geq -q_i(s_i',s_{-i}) \int_{s_i}^{s_i'}\frac{\partial v_i(x,s_{-i})}{\partial x}  \ dx-q_i(s_i',s_{-i})[v_i(s_i,s_{-i})-v_i(s_i',s_{-i})] $$
$$= -q_i(s_i',s_{-i})[ v_i(s_i',s_{-i})-v_i(s_i,s_{-i}) ] -q_i(s_i',s_{-i})[v_i(s_i,s_{-i})-v_i(s_i',s_{-i})]=0 $$

Thus we have ex post incentive compatibility of $(q,p)$, and hence $q$ is ex-post implementable.
\end{proof}

\begin{proof}[\textbf{Proof of Proposition \ref{epicprop}}]
Let $(q,p)$ be an EPIC mechanism. We already know from Theorem \ref{epicthm} that $q$ must be eventually monotone. So all we need to show is that the payments take the form specified in the hypothesis. 

Let $u_i$ denote the utility of agent $i$ from $(q,p)$. Since the value functions are convex, it is reasonable to expect that the net payoff functions $u_i$ are convex. This is indeed the case.

\begin{claim}\label{convexu}
For each $i$, $u_i$ is convex in $s_i$.
\end{claim}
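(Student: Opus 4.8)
The plan is to show that $u_i(\cdot, s_{-i})$ is convex by exhibiting it as a supremum of affine functions of $s_i$, which is the cleanest route given the EPIC constraint. First I would fix an agent $i$ and a profile $s_{-i}$ of the others' signals, and rewrite the EPIC inequality from the definition in the form $u_i(s_i, s_{-i}) \ge u_i(s_i', s_{-i}) + [v_i(s_i,s_{-i}) - v_i(s_i', s_{-i})] q_i(s_i', s_{-i})$ for all $s_i, s_i'$. Reading this with $s_i'$ fixed, the right-hand side is, as a function of $s_i$, the quantity $q_i(s_i', s_{-i}) v_i(s_i, s_{-i})$ plus a term not depending on $s_i$. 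Since $v_i(\cdot, s_{-i})$ is itself convex and $q_i(s_i', s_{-i}) \ge 0$, each such right-hand side is a convex function of $s_i$, and the EPIC inequality (together with the fact that equality holds when $s_i = s_i'$) says exactly that
\[
u_i(s_i, s_{-i}) = \sup_{s_i' \in S_i} \Bigl[ u_i(s_i', s_{-i}) + q_i(s_i', s_{-i})\bigl(v_i(s_i, s_{-i}) - v_i(s_i', s_{-i})\bigr) \Bigr].
\]

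Once this representation is established, the key step is to conclude convexity: a pointwise supremum of a family of convex functions of $s_i$ is convex, and every function in the family above is convex in $s_i$ because it is a nonnegative scalar $q_i(s_i', s_{-i})$ times the convex function $v_i(\cdot, s_{-i})$ plus a constant. Hence $u_i(\cdot, s_{-i})$ is convex, which is the claim. I would take care to note that the supremum is finite and attained (it is attained at $s_i' = s_i$ by the truth-telling identity), so there is no issue of the supremum being identically $+\infty$.

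The main obstacle, such as it is, lies in correctly justifying the supremum representation rather than in the convexity conclusion itself. One must verify both directions: the EPIC inequality gives $u_i(s_i, s_{-i})$ is at least every term in the bracket (so it dominates the supremum), while plugging in $s_i' = s_i$ makes the bracket equal to $u_i(s_i, s_{-i})$ (so the supremum is at least $u_i(s_i, s_{-i})$), yielding equality. A subtlety worth flagging is that convexity here is purely in the own-signal coordinate $s_i$ for each fixed $s_{-i}$, which is all the claim asserts and all that is needed downstream; I would not attempt to claim joint convexity. With the envelope representation in hand, the proof is essentially immediate, and it also sets up the standard envelope-theorem machinery that presumably yields the payment formula of Proposition \ref{epicprop} in the remainder of the argument.
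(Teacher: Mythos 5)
Your proof is correct. It takes a recognizably different route from the paper's, though the two arguments rest on the same two ingredients: convexity of $v_i(\cdot,s_{-i})$ and nonnegativity of $q_i$. You represent $u_i(\cdot,s_{-i})$ as the pointwise supremum over $s_i'$ of the functions $s_i \mapsto u_i(s_i',s_{-i}) + q_i(s_i',s_{-i})\bigl(v_i(s_i,s_{-i})-v_i(s_i',s_{-i})\bigr)$, each of which is a nonnegative multiple of a convex function plus a constant, and invoke the fact that a supremum of convex functions is convex; the attainment at $s_i'=s_i$ guarantees the supremum is finite and equals $u_i$. The paper instead verifies the convexity inequality directly: for $s_2=\lambda s_1+(1-\lambda)s_3$ it writes down the two EPIC constraints for the deviations $s_1\to s_2$ and $s_3\to s_2$, takes the convex combination with weights $\lambda$ and $1-\lambda$, and uses $\lambda v_i(s_1,s_{-i})+(1-\lambda)v_i(s_3,s_{-i})\ge v_i(s_2,s_{-i})$ together with $q_i\ge 0$ to drop the residual term. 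If you unwind your supremum argument at the point $s_2$, you recover exactly the paper's computation, so the two proofs are equivalent at their core; what your packaging buys is the standard envelope viewpoint (which, as you note, feeds directly into the derivative identity and payment formula of Proposition \ref{epicprop}), while the paper's direct verification is more self-contained and avoids any discussion of whether the supremum is finite or attained. Your flagged subtleties (convexity only in the own-signal coordinate; finiteness of the supremum) are handled correctly.
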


\begin{proof}[Proof of Claim \ref{convexu}]
 Consider any $s_1,s_3 \in S$. Let $s_2=\lambda s_1+(1-\lambda) s_3$, where $\lambda \in (0,1)$.
By convexity of $S, \ s_2 \in S$. By incentive compatibility,
\begin{equation}\label{1}
    u_i(s_1,s_{-i})\geq u_i(s_2,s_{-i})+[v_i(s_1,s_{-i})-v_i(s_2,s_{-i})]q_i(s_2,s_{-i})
\end{equation}
\begin{equation}\label{2}
    u_i(s_3,s_{-i})\geq u_i(s_2,s_{-i})+[v_i(s_3,s_{-i})-v_i(s_2,s_{-i})]q_i(s_2,s_{-i})
\end{equation}

Multiplying \ref{1} with $\lambda$ and \ref{2} with $1-\lambda$ and summing
\begin{align*}
    \lambda u_i(s_1,s_{-i}) & + (1-\lambda)u_i(s_3,s_{-i}) \\
    &\geq u_i(s_2,s_{-i})+q_i(s_2,s_{-i})[\lambda v_i(s_1,s_{-i})+(1-\lambda)v_i(s_3,s_{-i})-v_i(s_2,s_{-i})] \\
    &\geq u_i(s_2,s_{-i})
\end{align*}

where the last inequality follows from convexity of $v_i$.
\end{proof}

The next step of the proof uses the convexity of $u_i$ to establish the relationship between $u_i$ and $v_i$.

\begin{claim}\label{gradu}
Given ex-post incentive compatibility, $\frac{\partial u_i}{\partial s_i}=\frac{\partial v_i}{\partial s_i}q_i(s_i,s_{-i})$ almost everywhere.
\end{claim}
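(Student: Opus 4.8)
The plan is to derive the claim from the two incentive constraints obtained by exchanging the roles of the two reports, combined with the convexity of $u_i$ established in Claim \ref{convexu}. Throughout I fix $i$ and $s_{-i}$, suppress $s_{-i}$ from the notation, and write primes for derivatives in $s_i$. Applying the EPIC inequality to the ordered pair $(a,b)$ and then to $(b,a)$ gives
\begin{equation*}
u_i(a) \geq u_i(b) + [v_i(a)-v_i(b)]\,q_i(b), \qquad u_i(b) \geq u_i(a) + [v_i(b)-v_i(a)]\,q_i(a),
\end{equation*}
which rearrange into the sandwich
\begin{equation*}
[v_i(a)-v_i(b)]\,q_i(b) \;\leq\; u_i(a)-u_i(b) \;\leq\; [v_i(a)-v_i(b)]\,q_i(a).
\end{equation*}

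First I would fix a point $x$ at which both $u_i$ and $v_i$ are differentiable in $s_i$; since both functions are convex in $s_i$ (the former by Claim \ref{convexu}, the latter by assumption), each is differentiable outside a set of measure zero, so the set of such $x$ has full measure. At such an $x$ I would take two one-sided limits. Putting $a=x+h$, $b=x$ with $h>0$ and using $v_i(x+h)\geq v_i(x)$, the \emph{left-hand} bound gives $\frac{v_i(x+h)-v_i(x)}{h}\,q_i(x) \leq \frac{u_i(x+h)-u_i(x)}{h}$, and letting $h\downarrow 0$ yields $v_i'(x)\,q_i(x) \leq u_i'(x)$. Symmetrically, putting $a=x$, $b=x-h$ with $h>0$, the \emph{right-hand} bound gives $\frac{u_i(x)-u_i(x-h)}{h} \leq \frac{v_i(x)-v_i(x-h)}{h}\,q_i(x)$, and letting $h\downarrow 0$ yields $u_i'(x) \leq v_i'(x)\,q_i(x)$. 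The two inequalities force $u_i'(x)=v_i'(x)\,q_i(x)$, which is the claimed identity $\frac{\partial u_i}{\partial s_i}=\frac{\partial v_i}{\partial s_i}\,q_i(s_i,s_{-i})$ at $x$, and hence almost everywhere.

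The point I expect to require the most care is the worry that $q_i$ need not be continuous (it need only be eventually monotone), so that naively both outer terms in the sandwich involve $q_i$ evaluated at the moving endpoint. The key is that in each of the two limits I retain only the bound whose $q_i$ is evaluated at the fixed point $x$ — the left bound when perturbing upward, the right bound when perturbing downward — so that the allocation term is constant at $q_i(x)$ and no continuity or monotonicity of $q_i$ is needed for the limit to pass through. The only ingredients are differentiability of $u_i$ and $v_i$ at $x$ (holding a.e.), the monotonicity of $v_i$ to preserve the inequality directions when multiplying through by $q_i \geq 0$, and the two-sided IC sandwich above.
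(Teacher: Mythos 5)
Your proof is correct and follows essentially the same route as the paper's: the two ex-post IC constraints give a two-sided bound on the difference quotient of $u_i$, which is then squeezed at any point where $u_i$ and $v_i$ are both differentiable in $s_i$ (a full-measure set by convexity). In fact your version is slightly \emph{more} careful than the paper's at the one delicate spot: by taking the upper bound from a downward perturbation you keep the allocation term fixed at $q_i(x,s_{-i})$, whereas the paper's upper bound carries $q_i(s_i+\epsilon,s_{-i})$ and invokes the sandwich theorem without justifying the limit of that term (since $q_i$ need not be continuous), so your variant quietly repairs that looseness.
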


\begin{proof}[Proof of Claim \ref{gradu}]
By convexity of $u_i$ and $v_i$, they are not differentiable at most in a pair of measure $0$ subsets. Since their union will continue to be measure $0$, we must have that $u_i$ and $v_i$ are together differentiable almost everywhere.

Let $s_i$ be an interior point where $u_i$ and $v_i$ are both differentiable and let $\epsilon > 0$. By incentive compatibility,
\begin{equation}\label{3}
    \frac{u_i(s_i+\epsilon,s_{-i})-u_i(s_i,s_{-i})}{\epsilon}\geq \frac{v_i(s_i+\epsilon,s_{-i})-v_i(s_i,s_{-i})}{\epsilon}q_i(s_i,s_{-i})
\end{equation}

\begin{equation}\label{4}
    \frac{u_i(s_i,s_{-i})-u_i(s_i+\epsilon,s_{-i})}{\epsilon}\geq \frac{v_i(s_i,s_{-i})-v_i(s_i+\epsilon,s_{-i})}{\epsilon}q_i(s_i+\epsilon,s_{-i})
\end{equation}

Rewriting \ref{4}, we get,
\begin{equation}\label{5}
    \frac{v_i(s_i+\epsilon,s_{-i})-v_i(s_i,s_{-i})}{\epsilon}q_i(s_i+\epsilon,s_{-i}) \geq \frac{u_i(s_i+\epsilon,s_{-i})-u_i(s_i,s_{-i})}{\epsilon}
\end{equation}

Let $\epsilon \rightarrow 0$ and use the sandwich theorem on \ref{3} and \ref{5} to get,

\begin{equation*}
    \frac{\partial u_i}{\partial s_i}=\frac{\partial v_i}{\partial s_i}q_i(s_i,s_{-i})
\end{equation*}
\end{proof}

We  use  the  fundamental  theorem  of  calculus  to  derive  what $p$ must look like.

\begin{claim}\label{paycorr}
The payment rule $p$ must be of the form
$$p_i(s_i,s_{-i})=p_i(\underline{s},s_{-i})+q_i(s_i,s_{-i})v_i(s_i,s_{-i})-q_i(\underline{s},s_{-i})v_i(\underline{s},s_{-i})-\int^{s_i}_{\underline{s}}\frac{\partial v_i(x,s_{-i})}{\partial x} q_i(x,s_{-i}) \ dx$$
\end{claim}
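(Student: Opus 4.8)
The plan is to integrate the pointwise derivative identity established in Claim \ref{gradu} and then translate the resulting expression for $u_i$ into one for $p_i$ using the definition of the net payoff function. First I would invoke Claim \ref{convexu}: since $u_i(\cdot,s_{-i})$ is convex in its first argument, it is locally Lipschitz on the interior of $S$ and hence absolutely continuous on $[\underline{s},s_i]$. Absolute continuity is exactly what licenses the fundamental theorem of calculus, so I can write
$$u_i(s_i,s_{-i}) = u_i(\underline{s},s_{-i}) + \int_{\underline{s}}^{s_i}\frac{\partial u_i(x,s_{-i})}{\partial x}\,dx.$$

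Next I would substitute the almost-everywhere identity $\frac{\partial u_i}{\partial s_i} = \frac{\partial v_i}{\partial s_i}q_i(s_i,s_{-i})$ from Claim \ref{gradu} into the integrand. Because the two integrands agree except on a set of measure zero, the integrals coincide, yielding
$$u_i(s_i,s_{-i}) = u_i(\underline{s},s_{-i}) + \int_{\underline{s}}^{s_i}\frac{\partial v_i(x,s_{-i})}{\partial x}\,q_i(x,s_{-i})\,dx.$$
I would then apply the definition $u_i(s_i,s_{-i}) = q_i(s_i,s_{-i})v_i(s_i,s_{-i}) - p_i(s_i,s_{-i})$ at both $s_i$ and the lower endpoint $\underline{s}$ to eliminate the two $u_i$ terms in favor of $p_i$ terms, namely $u_i(\underline{s},s_{-i}) = q_i(\underline{s},s_{-i})v_i(\underline{s},s_{-i}) - p_i(\underline{s},s_{-i})$. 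Rearranging to isolate $p_i(s_i,s_{-i})$ then produces exactly the claimed formula.

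The main obstacle is the justification of the fundamental theorem of calculus step: one must know that $u_i$ is absolutely continuous, not merely differentiable almost everywhere, since a function that is differentiable a.e. need not equal the integral of its derivative (e.g.\ the Cantor function). Convexity from Claim \ref{convexu} delivers absolute continuity on the interior for free, while continuity of $v_i$ and boundedness of $q_i \in [0,1]$ control behavior at the endpoints, so this is the step where a little care is needed. A minor related point is that the particular subgradient selected for $\frac{\partial v_i(x,s_{-i})}{\partial x}$ at points where $v_i$ fails to be differentiable is immaterial to the integral, since such points form a measure-zero set; the integral, and hence the formula for $p_i$, is therefore well-defined regardless of the selection.
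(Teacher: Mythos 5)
Your proposal is correct and follows essentially the same route as the paper: both integrate the almost-everywhere identity from Claim \ref{gradu} via the fundamental theorem of calculus for absolutely continuous functions and then unwind the definition of $u_i$ to solve for $p_i$. Your additional remarks on why convexity delivers absolute continuity and why the subgradient selection is immaterial are sound elaborations of the footnoted justifications the paper leaves implicit.
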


\begin{proof}[Proof of Claim \ref{paycorr}]
From Claim \ref{gradu} and using the fundamental theorem of calculus\footnote{Specifically, the fundamental theorem of Lebesgue integral calculus for absolutely continuous functions.},
$$u_i(s_i,s_{-i})=u_i(\underline{s},s_{-i})+\int_{\underline{s}}^{s_i}\frac{\partial v_i}{\partial x}q_i(x,s_{-i}) \ dx$$
which gives us the required equality.
\end{proof}

For the converse, suppose $q$ is eventually monotone and the payments formula take the form specified. The proof of the converse of Theorem \ref{epicthm} shows that $(q,p)$ is EPIC, and hence we are done.

\end{proof}

\begin{proof}[\textbf{Proof of Lemma \ref{revform}}]

As is standard in the case of private values, in this setting as well IR reduces to ensuring IR of the lowest type.
\begin{claim}\label{IRlow}
Given ex-post incentive compatibility, a mechanism is individually rational if and only if $p_i(\underline{s},s_{-i})\leq q_i(\underline{s},s_{-i})v_i(\underline{s},s_{-i})$
\end{claim}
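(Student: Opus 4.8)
The plan is to characterize ex-post individual rationality by using the convexity of the net payoff function $u_i$ established in Claim \ref{convexu}. The key observation is that EPIR requires $u_i(s_i,s_{-i}) \geq 0$ for all $s_i$, which is equivalent to requiring that $\min_{s_i} u_i(s_i,s_{-i}) \geq 0$. First I would argue that the minimum of $u_i(\cdot, s_{-i})$ over $s_i$ is attained at the lowest type $\underline{s}$. This is where the structure of the problem comes in: since $v_i$ is non-decreasing and $q_i \geq 0$, Claim \ref{gradu} tells us that $\frac{\partial u_i}{\partial s_i} = \frac{\partial v_i}{\partial s_i} q_i(s_i,s_{-i}) \geq 0$ almost everywhere, so $u_i$ is non-decreasing in $s_i$. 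Combined with convexity (or on its own), this means $u_i$ attains its minimum over $s_i$ at $s_i = \underline{s}$.

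Given that the minimum is at $\underline{s}$, the EPIR condition $u_i(s_i,s_{-i}) \geq 0$ for all $s_i$ reduces to the single condition $u_i(\underline{s},s_{-i}) \geq 0$. I would then unwind the definition of $u_i$: since
$$u_i(\underline{s},s_{-i}) = q_i(\underline{s},s_{-i})v_i(\underline{s},s_{-i}) - p_i(\underline{s},s_{-i}),$$
the requirement $u_i(\underline{s},s_{-i}) \geq 0$ is exactly $p_i(\underline{s},s_{-i}) \leq q_i(\underline{s},s_{-i})v_i(\underline{s},s_{-i})$, which is the claimed inequality. The two directions of the ``if and only if'' are then immediate: if the mechanism is IR then in particular the lowest type's constraint holds, giving the inequality; conversely if the inequality holds then $u_i(\underline{s},s_{-i}) \geq 0$, and since $\underline{s}$ is the minimizer, $u_i(s_i,s_{-i}) \geq 0$ for all $s_i$.

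The main obstacle, and the step I would spell out most carefully, is justifying that $u_i(\cdot, s_{-i})$ is monotone (hence minimized at $\underline{s}$). The cleanest route is to invoke Claim \ref{gradu}, which holds under EPIC and gives $\frac{\partial u_i}{\partial s_i} = \frac{\partial v_i}{\partial s_i} q_i \geq 0$ almost everywhere because $v_i$ is non-decreasing and $q_i$ is a probability, hence non-negative. Since $u_i$ is convex (Claim \ref{convexu}), it is absolutely continuous on the interior and equals the integral of its derivative, so a non-negative derivative almost everywhere yields that $u_i$ is non-decreasing in $s_i$. I would note that this argument relies essentially on the hypothesis of ex-post incentive compatibility, which is already assumed in the statement of the lemma's claim (``Given ex-post incentive compatibility''), so invoking Claims \ref{convexu} and \ref{gradu} is legitimate. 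With monotonicity in hand, the rest is a one-line substitution, so the proof is short.
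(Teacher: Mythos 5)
Your proof is correct and takes essentially the same route as the paper's: the paper simply writes out the envelope representation $u_i(s_i,s_{-i}) = -p_i(\underline{s},s_{-i}) + q_i(\underline{s},s_{-i})v_i(\underline{s},s_{-i}) + \int_{\underline{s}}^{s_i}\frac{\partial v_i(x,s_{-i})}{\partial x}q_i(x,s_{-i})\,dx$ and notes that the integral term is non-negative, which is exactly your observation that $u_i(\cdot,s_{-i})$ is non-decreasing and hence minimized at $\underline{s}$, stated in integrated rather than differential form.
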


\begin{proof}[Proof of Claim \ref{IRlow}]
This is a consequence of the following:
$$u_i(s_i,s_{-i}) = -p_i(\underline{s},s_{-i})+q_i(\underline{s},s_{-i})v_i(\underline{s},s_{-i})+\int^{s_i}_{\underline{s}}\frac{\partial v_i(x,s_{-i})}{\partial x} q_i(x,s_{-i}) \ dx \geq 0 \ \forall \ s_i$$

$$\Leftrightarrow$$
$$p_i(\underline{s},s_{-i})\leq q_i(\underline{s},s_{-i})v_i(\underline{s},s_{-i})$$
\end{proof}

Using Proposition \ref{epicprop}, the expression for expected revenue in any EPIC mechanism is
$$\sum_{i \in N} \int_{s \in S}p_i(s)f(s) \ ds $$
$$=\sum_{i \in N} \int_{s \in S}\left \{p_i(\underline{s},s_{-i})+q_i(s_i,s_{-i})v_i(s_i,s_{-i})-q_i(\underline{s},s_{-i})v_i(\underline{s},s_{-i})-\int^{s_i}_{\underline{s}}\frac{\partial v_i(x,s_{-i})}{\partial x} q_i(x,s_{-i}) \ dx \right \} f(s) \ ds$$

\begin{eqnarray*}
=\sum_{i \in N} \int_{s_{-i} \in S_{-i}} \int_{s_i \in S_i} [ p_i(\underline{s},s_{-i})+q_i(s_i,s_{-i})v_i(s_i,s_{-i})-q_i(\underline{s},s_{-i})v_i(\underline{s},s_{-i})-  \\
 \int^{s_i}_{\underline{s}}\frac{\partial v_i(x,s_{-i})}{\partial x} q_i(x,s_{-i}) \ dx ] f(s_i)  \ ds_i \ f(s_{-i})\ ds_{-i}
\end{eqnarray*}

\begin{align*}
= &\sum_{i \in N} \int_{s_{-i} \in S_{-i}} \Bigl\{ \int_{\underline{s}}^{\overline{s}} q_i(s_i,s_{-i})v_i(s_i,s_{-i})f(s_i) \ ds_i \\
& -\int_{\underline{s}}^{\overline{s}} \Bigl[ \int_{\underline{s}}^{s_i}\frac{\partial v_i(x,s_{-i})}{\partial x} q_i(x,s_{-i}) \ dx \Bigr] f(s_i) \ ds_i \Bigr\} \ f(s_{-i})\ ds_{-i} \\
& + \ \sum_{i \in N} \int_{s_{-i} \in S_{-i}} [p_i(\underline{s},s_{-i})-q_i(\underline{s},s_{-i})v_i(\underline{s},s_{-i})] f(s_{-i}) \ ds_{-i}
\end{align*}

Changing the order of integration, we can write,
$$ \int_{\underline{s}}^{\overline{s}} q_i(s_i,s_{-i})v_i(s_i,s_{-i})f(s_i) \ ds_i-\int_{\underline{s}}^{\overline{s}} \left \{ \int_{\underline{s}}^{s_i}\frac{\partial v_i(x,s_{-i})}{\partial x} q_i(x,s_{-i}) \ dx \right \} f(s_i) \ ds_i $$

$$=\int_{\underline{s}}^{\overline{s}} \left \{ v_i(s_i,s_{-i})-\frac{1-F(s_i)}{f(s_i)} \frac{\partial v_i(s_i,s_{-i})}{\partial s_i} \right \} f(s_i)q_i(s_i,s_{-i}) \ ds_i$$

Thus, expected revenue is
$$
\sum_{i \in N} \int_{s_{-i} \in S_{-i}} \left \{ \int_{s_i \in S_i} [J_i (s_i,s_{-i}) f(s_i)q_i(s_i,s_{-i})] \ ds_i \right \} \ f(s_{-i})\ ds_{-i} $$
$$ + \ \sum_{i \in N} \int_{s_{-i} \in S_{-i}} [p_i(\underline{s},s_{-i})-q_i(\underline{s},s_{-i})v_i(\underline{s},s_{-i})] f(s_{-i}) \ ds_{-i}
$$

The second term in the above expression is non-positive under EPIR. Thus, in order to maximize revenue we must set $p_i(\underline{s},s_{-i}) = q_i(\underline{s},s_{-i})v_i(\underline{s},s_{-i})$. As a result, the optimal mechanism maximizes the first term, which reduces to:

$$
\int_{s \in S^n} \Bigl[ \sum_{i \in N} J_i (s) q_i(s) \Bigr] \ f(s)\ ds
$$

\end{proof}

\begin{proof}[\textbf{Proof of Proposition \ref{addthm}}]
If we were to pointwise maximize the revenue, we would not allocate the object at signal profiles where the virtual value of each agent is negative. At other profiles, note that the virtual value of agent $i$ is $J_i(s) = [\sum_{j \in N} c_j s_j - c_i \frac{1-F(s_i)}{f(s_i)}]$. Thus,

\begin{align*}
    & \sum_{i \in N} J_i (s) q_i(s) \\
    & = \sum_{i \in N} \ [\sum_{j \in N} c_j s_j - c_i \frac{1-F(s_i)}{f(s_i)}] q_i(s) \\
    & = \ \sum_{j \in N} s_j c_j \sum_{i \in N} q_i(s) - \sum_{i \in N} c_i [\frac{1-F(s_i)}{f(s_i)} q_i(s)]
\end{align*}

The first term above is unchanged given a fixed total probability of allocation. Again pointwise maximizing, we can allocate with probability 1 to the agent with the lowest adjusted hazard rate and maximize the entire term at a given signal profile.

Since the hazard rate is monotone non-increasing, once an agent starts winning by having the lowest adjusted hazard rate, she continues to win at higher signals. Thus this pointwise-maximizing allocation rule also turns out to be non-decreasing, and hence the maximizer within the class of non-decreasing allocation rules.
\end{proof}

\begin{proof}[\textbf{Proof of Proposition \ref{maxthm}}]
We invoke the optimality of the inclusive posted price under must-sell shown by Bergemann Brooks Morris (2019). For this we first show that ex post incentive compatibility is sufficient for interim incentive compatibility.

EPIC means $\forall i$, $\forall s_{-i}$, $\forall s_i, s_i'$,

$$u_i(s_i,s_{-i})\geq u_i(s_i',s_{-i})+[v_i(s_i,s_{-i})-v_i(s_i',s_{-i})]q_i(s_i',s_{-i})$$

$$\Rightarrow \int_{s_{-i}}u_i(s_i,s_{-i}) dG(s_{-i}) \geq \int_{s_{-i}} u_i(s_i',s_{-i})+[v_i(s_i,s_{-i})-v_i(s_i',s_{-i})]q_i(s_i',s_{-i}) \ dG(s_{-i}) \ \ \ \  \forall s_i,s_i', \ \forall i$$

Now, we show that ex post individual rationality is sufficient for interim individual rationality.

EPIR means $\forall i$, $\forall s_{-i}$, $\forall s_i$,
$$u_i(s_i,s_{-i}) \geq 0$$

$$\Rightarrow \int_{s_{-i}} u_i(s_i,s_{-i}) \ dG(s_{-i}) \geq 0 \ \ \ \  \forall s_i, \ \forall i$$

Since the allocation rule $q_i(s)=\frac{1}{N}, \ \forall i$, is monotone and hence eventually monotone, it is ex-post implementable. The corresponding payments are,

For agent $j$ such that $s_j \geq s_i, \forall i$

$$p_j(s)=\frac{1}{N}s_j-\int_{\mathrm{max} \ s_{-j}}^{s_j}  \frac{1}{N} dx$$
$$=\frac{\mathrm{max} \ s_{-j}}{N}$$

and for all $i \neq j$
$$p_i(s)=\frac{s_j}{N}$$

Thus the tuple $(q,p)$ is both EPIC and EPIR, and hence IIR and IIC.

We now show that in expectation, this payment scheme gives the same revenue as the inclusive posted price mechansim in Bergemann Brooks Morris (2019).

The expression for expected revenue is
$$\sum_{i \in N} \int_{s \in S}p_i(s)f(s) \ ds $$

Changing the order of summation,

$$=\int_{s \in S} \left\{ \sum_{i \in N} p_i(s) \right \} f(s) ds$$
$$=\int_{s \in S} \left\{ \frac{N-1}{N} s_{(N)} + \frac{1}{N} s_{(N-1)} \right \} f(s) ds\footnote{$s_{(j)}$ is the j-th order statistic}$$

$$=\frac{N-1}{N} E(s_{(N)}) + \frac{1}{N} E(s_{(N-1)})$$

$$=\frac{(N-1)}{N}\int_{s \in S} x N (F(x))^{N-1} f(x) dx + \frac{1}{N}\int_{s \in S} x N {N-1 \choose N-2} (1-F(x))(F(x))^{N-2}f(x)dx$$

$$=(N-1)\int_{s \in S} x (F(x))^{N-1} f(x) dx + (N-1) \int_{s \in S} x (1-F(x))(F(x))^{N-2}f(x)dx$$

$$=(N-1) \int_{s \in S} x (F(x))^{N-2} f(x) dx$$

$$=\int_{\underline{s}}^{\overline{s}} x \ d(F^{N-1}(x)) $$

It is direct to show that any mechanism in the class of mechanisms identified in this theorem has the same revenue as the equal sharing mechanism, and hence we are done.

\end{proof}

\end{appendices}

\end{spacing}
\end{document}